\newcommand{\mtodo}[1]{\todo[color=blue!20]{#1}}
\newtheorem{theorem}{Theorem}[section]
\newtheorem{lemma}[theorem]{Lemma}
\newtheorem{meta-theorem}[theorem]{Meta-Theorem}
\crefname{theorem}{Theorem}{Theorems}
\crefname{proposition}{Proposition}{Propositions}
\crefname{observation}{Observation}{Observations}
\crefname{lemma}{Lemma}{Lemmas}
\crefname{claim}{Claim}{Claims}
\crefname{problem}{Problem}{Problems}
\crefname{conjecture}{Conjecture}{Conjectures}
\crefname{question}{Question}{Questions}
\crefname{example}{Example}{Examples}
\crefname{fact}{Fact}{Facts}
\definecolor{darkgreen}{rgb}{0,0.5,0}
\algnewcommand\algorithmicswitch{\textbf{switch}}
\algnewcommand\algorithmiccase{\textbf{case}}
\newcommand{\eps}{\varepsilon}
\renewcommand{\P}{\textrm{P}}
\newcommand{\poly}{\operatorname{poly}}
\renewcommand{\phi}{\varphi}
\renewcommand{\paragraph}[1]{\vspace{0.15cm}\noindent {\bf #1}:}
\newcommand{\FullOrShort}{full}
  \newcommand{\fullOnly}[1]{#1}
  \newcommand{\shortOnly}[1]{}
    \newcommand{\fullOnly}[1]{}
    \newcommand{\IncludePictures}[1]{}
\title{Dynamic O(arboricity) coloring in polylogarithmic worst-case time }
\begin{document}
\date{}
 \author{
   Mohsen Ghaffari \\
   \small{MIT}\\
   \small{ghaffari@mit.edu}
   \and
   Christoph Grunau \\
   \small{ETH Zurich}\\   \small{cgrunau@inf.ethz.ch}
}

\maketitle
\begin{abstract}
A recent work by Christiansen, Nowicki, and Rotenberg [STOC'23] provides dynamic algorithms for coloring sparse graphs, concretely as a function of the graph's arboricity $\alpha$. They give two randomized algorithms: $O(\alpha \log \alpha)$ implicit coloring in $\poly(\log n)$ worst-case update and query times, and $O(\min\{\alpha \log \alpha, \alpha \log\log\log n\})$ implicit coloring in $\poly(\log n)$ amortized update and query times (against an oblivious adversary). We improve these results in terms of the number of colors and the time guarantee: First, we present an extremely simple algorithm that computes an $O(\alpha)$-implicit coloring with $\poly(\log n)$ amortized update and query times. Second, and as the main technical contribution of our work, we show that the time complexity guarantee can be strengthened from amortized to worst-case. That is, we give a dynamic algorithm for implicit $O(\alpha)$-coloring with $\poly(\log n)$ worst-case update and query times (against an oblivious adversary).
\end{abstract}
\thispagestyle{empty}
\newpage
\setcounter{page}{1}

\section{Introduction}
This paper is centered on dynamic algorithms for arboricity-dependent graph coloring. We present a randomized dynamic implicit coloring algorithm with polylogarithmic worst-case update and query times and using a number of colors within a constant factor of the optimal. Next, we first set the context and review the state of the art. Then, we overview our results.  

\subsection{Background: Arboricity-Dependent Coloring and Dynamic Algorithms}
\noindent\textbf{Arboricity.} The arboricity $\alpha(G)$ of a graph $G=(V, E)$ is a measure of its density in the densest part. By one definition, $\alpha(G)$ is the minimum number of forests to which the edges of $G$ can be decomposed. An alternative definition, which was shown to be equivalent by Nash-Williams~\cite{nash1964decomposition}, is $\alpha(G)=\max_{S\subseteq V} \lceil\frac{|E[S]|}{|S|-1}\rceil$. That is, roughly speaking, $\alpha_{G}$ is the maximum edge density among subgraphs of $G$. 

Arboricity is within a constant factor of numerous other standard measures of sparsity, including degeneracy, core number, width, linkage, coloring number, and Szekeres–Wilf number. For instance, the degeneracy $g(G)$ of a graph $G$ is the smallest value of $g$ such that the edges can be oriented to form a directed acyclic graph with outdegree at most $g$. One can see that $\alpha(G) \leq g(G)\leq 2\alpha(G)$.

\paragraph{Arboricity-Dependent Coloring} Graph coloring is hard even for modest approximations: for all $\eps >0$, approximating the chromatic number within $n^{1-\eps}$ is NP-hard~\cite{zuckerman2006linear}. Much of the algorithmic work on coloring focuses on coloring as a function of simpler properties of the graph. One example is $\Delta+1$ coloring, where $\Delta$ denotes the maximum degree. This is a trivial sequential problem but it has been widely studied in many modern computational settings including distributed, parallel, streaming, dynamic models, and sublinear-time centralized, see e.g.,~\cite{luby1985simple,assadi2019sublinear,chang2018optimal,chang2019complexity, ghaffari2022deterministic, bhattacharya2022fully}.

With $\Delta+1$ coloring, even a single high-degree node can blow up the number of colors. Arboricity gives a much more refined and robust measure.\footnote{The addition of any set of $k$ vertices and arbitrary edges incident on them increases the arboricity by at most $k$.} Any graph with arboricity $\alpha$ admits a vertex coloring with $2\alpha$ colors and this bound is sharp. For the former, notice that we can iteratively remove a vertex of degree at most $2\alpha-1$ until we exhaust the vertices (there is always such a vertex to remove). Then, we color the vertices greedily, in reverse of this order. The bound's sharpness is showcased by a clique of size $2\alpha$. Notice also that $\alpha(G)\leq \Delta$. In this paper, our focus is on dynamic algorithms that compute an arboricity-dependent coloring with an asymptotically optimal number of colors.   

\paragraph{Dynamic Algorithms}
Often, we need to solve the computational problem for a long list of inputs, where each input involves small changes to the previous input. This can arise naturally in practical settings but it is also useful in algorithmic applications of one subroutine in a larger algorithm. In such scenarios, ideally, we want not to resolve the problem from scratch; the algorithm should instead smoothly and quickly adapt to the changes. Our particular focus is on dynamic algorithms for coloring. Here, the graph undergoes updates in the format of edge insertions or deletions, and we want to have a coloring of the graph resulting from the updates. A key parameter of interest, known as the \textit{update time}, is the time needed to process each update.

\paragraph{Explicit vs. Implicit Outputs}
There is a distinction in how the dynamic algorithm provides access to the output. In an \textit{explicit dynamic algorithm}, with each update, the algorithm updates the output. However, in some problems, directly updating the output might be expensive, because even a single change in the input can require a large number of changes in the output---this is known as a \textit{high recourse}. For instance, it is known that when maintaining $2$-edge-connected components, or planar embeddings, even a single change can necessitate up to $\Theta(n)$ changes in the output. There is a workaround definition for these scenarios that can give something essentially as useful as directly updating the outputs: we prefer that, instead of directly adjusting the output which requires a lot of time, the dynamic algorithm processes each update quickly and adjusts a data structure. This data structure then gives fast access to any part of the output. This way, for every query to any single part of the output, we can determine that part quickly by using the adjusted data structure, in a time known as the \textit{query time}. 

It is known that explicit arboricity-dependent coloring requires a high update time: Barba et al.~\cite{barba2019dynamic} showed that any algorithm for explicit $f(\alpha)$ coloring requires update time $n^{\Omega(1)}$. Concretely, even $O(1)$ edge updates can necessitate $n^{\Omega(1)}$ color changes. Because of this, the focus (in prior work as well as in our work) has moved to implicit arboricity-dependent dynamic colorings with small update and query times. 

\subsection{State of the Art}
\paragraph{Orientation and Coloring} Arboricity-dependent coloring algorithms, including dynamic versions, usually involve two ingredients: (1) an algorithm that computes an orientation of the edges such that each node has a small outdegree, and (2) a coloring scheme that uses the orientation. The orientation is useful because we can make each node responsible for avoiding color conflicts with only its out-neighbors. For instance, in static computations, given an acyclic orientation with outdegree at most $2\alpha-1$, we can color the nodes greedily, by iteratively coloring a sink node using a color in $\{1, 2, \dots, 2\alpha\}$ and then removing it from the graph. The orientation ingredient is relatively well-understood in the dynamic setting: there are explicit dynamic algorithms that provide an orientation with outdegree at most $d=O(\alpha)$ in $\poly(\log n)$ worst-case update time~\cite{christiansen2022adaptive}. But the second ingredient still has much room for improvement. Our work, as well as the work of Christiansen, Nowicki, and Rotenberg~\cite{Christiansen23}, is on implicit dynamic coloring with a small number of colors, given a dynamically-maintained orientation with outdegree at most $d=O(\alpha)$.

\paragraph{Direct implications of classic distributed coloring algorithms} As pointed out by Henzinger et al.~\cite{henzinger2020explicit} and Christiansen et al.~\cite{Christiansen23}, given the orientations, one can obtain certain implicit colorings by importing classic distributed/local algorithms, essentially in a black-box manner. Let us review this.

Henzinger et al.~\cite{henzinger2020explicit} implicitly import a $2^{O(\alpha)}$ coloring approach of Goldberg, Plotkin, and Shannon~\cite{goldberg1987parallel}. Consider the orientation with outdegree $d$, and decompose the graph into $d$ edge-disjoint graphs $F_1, F_2, \dots, F_d$ of out-degree one. For that, simply let each node put one of its outgoing edges in each $F_i$. It is known by a classic distributed algorithm of Cole and Vishkin~\cite{cole1986deterministic} that we can compute a $3$-coloring of each $F_i$ in a simple local fashion\footnote{Henzinger et al. instead compute a $2$-coloring of each forest $F_i$, by dynamically maintaining the distances from a root and using the parity as a color. However, for that, they use $O(\log n)$ update time for each $F_i$. What we describe gives a faster dynamic algorithm and the degradation from $2$ to $3$ does not invalidate the $2^{O(\alpha)}$ number of colors.}: each node's color can be determined as a simple deterministic function of the $O(\log^* n)$ closest nodes in its outreach, i.e., nodes reachable by following the outgoing edges up to distance $O(\log^*n)$. This gives a $3$ coloring for each $F_i$. The product of these colors gives a $3^{d}=2^{O(\alpha)}$ coloring of the entire graph. Once the algorithm is queried for the color of a node $v$, the implicit dynamic coloring algorithm does as follows: learn the $O(\log^*n)$ outreach of $v$ along each of the subgraphs $F_i$, in a total of $O(\alpha \log^* n)$ time, and apply Cole-Vishkin to determine the color of $v$ in $\{1, 2, \dots, 3^{d}\}$. 

Christiansen et al.~\cite{Christiansen23} mention that one can improve the number of colors nearly exponentially, by using a distributed algorithm of Linial~\cite{linial1992locality}. Linial gives a very local recoloring scheme that reduces the number of colors rapidly: Given a current $k$-coloring, each node can compute its own $k'$-coloring for $k'=O(d^2 (\log_{d}{k})^2)$, by simply knowing the current colors of its up to $d$ outneighbors. We can apply this scheme twice on top of the $3^{d}$ coloring mentioned above. The first application reduces the number of colors to at most $O(d^4)$. The second application reduces the number of colors to $O(d^2)$. This gives an implicit dynamic algorithm for $O(d^2)=O(\alpha^2)$ coloring. Let us discuss the query time: to determine the color of each queried node $v$, we need to compute the $3^{d}$ coloring of each of outneighbors of $v$ and also each of the outneighbors of those outneighbors. Then we can compute the $O(d^4)$ colorings of the immediate outneighbors of $v$. And that lets us compute the $O(d^2)$ coloring of node $v$. Hence, the query time is $O(d^2) \cdot O(d\log^* n) \in \poly(d\log n)= \poly(\alpha\log n)$.

\paragraph{Improved Colorings of Christiansen et al.~\cite{Christiansen23}} Besides mentioning the above $O(\alpha^2)$ coloring that follows from classic distributed algorithms, as the main technical contributions of their work, Christiansen et al.~\cite{Christiansen23} provide two dynamic algorithms:

\begin{itemize}
\item[-] A randomized implicit coloring with $O(\alpha \log \alpha)$ colors and worst-case update and query times $\poly(\log n)$.
\item[-] A randomized implicit coloring with $O(\min\{\alpha \log \alpha, \alpha \log\log\log n\})$ colors and \textit{amortized} update and query times $\poly(\log n)$.
\end{itemize}

\subsection{Our Results}
Our main contribution is a randomized implicit coloring algorithm that improves the above results, by tightening the number of colors to the \textit{asymptotically optimal} bound $O(\alpha)$, with \textit{worst-case} $\poly(\log n)$ time guarantees: 

\begin{theorem}\label{thm:main} There is a randomized dynamic algorithm that computes an $O(\alpha)$ implicit coloring in $\poly(\log n)$ worst-case update and query times, against an oblivious adversary.
\end{theorem}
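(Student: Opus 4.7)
The backbone is the dynamic orientation of Christiansen et al.~\cite{christiansen2022adaptive}: an acyclic out-orientation $\vec{G}$ with out-degree at most $d = O(\alpha)$, maintained in $\poly(\log n)$ worst-case time per update. On top of this, sink-elimination gives a valid $(d+1)$-coloring in the static sense, so the target palette size $O(\alpha)$ is already available; the task is to compute and maintain such a coloring implicitly with $\poly(\log n)$ worst-case query and update time.

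My plan is to express the color $c(v)$ as the output of a short random local computation rooted at $v$. I would assign each vertex $v$ an independent uniform priority $\pi(v)$ from a shared random tape and define $c(v)$ recursively by exploring a priority-guided subset of $v$'s out-edges --- for instance, descending only into higher-priority out-neighbors in $\vec{G}$ --- and handling the omitted lower-priority edges through a disjoint sub-palette or a complementary exploration so that the final coloring is proper while still using only $O(\alpha)$ colors in total. The central correctness lemma I would aim for is that the dependency subgraph used to compute $c(v)$ has $\poly(\log n)$ vertices with high probability. Given this lemma, an \emph{amortized} version of the algorithm is immediate: cache computed colors and, after each edge update, invalidate only those caches whose dependency subgraphs contain the updated edge, via an inverse index from edges to dependent vertices; by the same tail bound, only $\poly(\log n)$ caches need to be refreshed per update on average. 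This plausibly matches the ``extremely simple'' amortized algorithm that the introduction promises.

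To upgrade from amortized to worst-case --- the main technical contribution of \Cref{thm:main} --- I would introduce a de-amortized scheduler that spreads recomputations across update steps. Each update marks a $\poly(\log n)$-sized dirty set, but rather than eagerly repairing all of them, the scheduler processes only a fixed polylogarithmic chunk of dirty work per update step, while always maintaining two invariants: (i) the backlog of pending work stays $\poly(\log n)$, and (ii) any query returns a color that is consistent with the current graph. The main obstacle I anticipate is establishing invariant (i): showing that over any polynomial-length sequence of updates, the accumulated dirty backlog never exceeds the polylogarithmic per-step quota with high probability. This needs a concentration bound over the random priorities $\pi$ together with a union bound over all intermediate timesteps, and it crucially uses the oblivious adversary assumption so that the adversary's update sequence is independent of the random tape. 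A secondary difficulty is ensuring that intermediate-time queries see a globally valid coloring even while many vertices are mid-recomputation; this likely requires a ``monotone'' design of the local rule in which partial, in-progress refreshes cannot create color conflicts at vertices outside the current dirty set, so that correctness can be preserved without paying for a full synchronous rebuild after every update.
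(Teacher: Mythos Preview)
Your proposal diverges from the paper at almost every step, and several of the divergences are genuine gaps rather than alternative routes.

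First, a factual issue: the orientation of \cite{christiansen2022adaptive} is \emph{not} guaranteed to be acyclic. The paper explicitly notes that ``the orientation may include cycles.'' So your opening move---invoking sink-elimination on an acyclic $\vec G$ to get a $(d{+}1)$-coloring---is not available. This already undermines the priority-guided recursion you sketch, since without acyclicity there is no termination argument for following out-edges.

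Second, the local rule you propose is too underspecified to yield a proper $O(\alpha)$-coloring. ``Descend only into higher-priority out-neighbors and handle the rest via a disjoint sub-palette or complementary exploration'' is not a working scheme: a disjoint sub-palette per omitted branch blows up the color count, and a complementary exploration reintroduces the full dependency tree you were trying to prune. You would need to exhibit an actual rule and prove both properness and the $\poly(\log n)$ dependency-size tail bound; neither is done.

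Third, and most importantly, your path to worst-case bounds---a de-amortized scheduler that spreads a backlog of dirty recomputations over future updates---is not what the paper does and is unlikely to work as stated. The obstacle you yourself flag (invariant (ii): queries mid-backlog must still return a globally proper color) is the crux, and you offer no mechanism for it. The paper sidesteps de-amortization entirely. Its worst-case algorithm is structurally the same as its amortized one: on a query at $u$, grow a set $V^*\ni u$ by processing outgoing arcs, but instead of a deterministic threshold, recurse on an out-neighbor $w$ with probability $\frac{1}{6d+1-|A^p\cap IN(w)|}$, where $A^p$ is the set of already-processed arcs. This probability hits $1$ once $w$ has $6d$ processed in-arcs, so the invariant (each uncolored node has at most $6d$ colored in-neighbors) is preserved exactly as before, guaranteeing correctness. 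The heart of the argument is then \Cref{lem:randomized}: for any fixed set $A'$ of arcs, the probability that every arc in $A'$ triggers a recursion is at most $(1/6d)^{|A'|}$, proven by a delicate reverse induction on a potential $f(A^p,A^r)$; combined with the standard $(4d)^k$ count of rooted out-trees, this gives $|V^*|=O(\log n)$ with high probability \emph{per query}, directly. No scheduler, no backlog, no consistency-during-repair issue. The random partitioning of \Cref{thm:partition} then converts the $O(d\log n)$ query time into $\poly(\log n)$.
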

Furthermore, along the way to this result, we present an extremely simple randomized algorithm that computes an $O(\alpha)$ implicit coloring in $\poly(\log n)$ update time and with $\poly(\log n)$ amortized query time. The deterministic core of this amortized algorithm can be stated as follows.

\begin{theorem}\label{thm:amortized} There is a deterministic dynamic algorithm that computes an $O(\alpha)$ implicit coloring with $\poly(\log n)$ worst-case update time and $O(\alpha)$ amortized query time. 
\end{theorem}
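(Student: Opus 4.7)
The plan is to combine a dynamically-maintained low-outdegree out-orientation with the natural greedy implicit coloring rule, serving queries via memoized recursion whose cost amortizes to $O(\alpha)$ per query. First, I invoke the dynamic out-orientation algorithm of~\cite{christiansen2022adaptive} to maintain an orientation $\vec{G}$ with out-degree at most $d = O(\alpha)$ in $\poly(\log n)$ worst-case update time. Combining $\vec{G}$ with a tiebreaking total order on vertices (for example, lex order on ``(level, id)'' pairs, where levels come from the orientation data structure) yields an acyclic orientation without increasing the out-degree, so $N^+(v)$, defined as the neighbors of $v$ that are larger in this order, still satisfies $|N^+(v)| \leq d$ for all $v$.

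I then define the implicit coloring by the greedy recursion
\[
c(v) \;=\; \min\bigl\{\, i \in \{1, \ldots, d+1\} : i \notin \{c(u) : u \in N^+(v)\}\,\bigr\},
\]
which is well-defined by acyclicity and yields a proper $(d+1) = O(\alpha)$-coloring because $|N^+(v)| \leq d$. Queries are served by memoized recursion into a global cache $\mathcal{C}$: a query for $c(v)$ returns $\mathcal{C}[v]$ if present; otherwise it recursively computes $c(u)$ for each $u \in N^+(v)$, takes the smallest color unused, caches it, and returns. Each cache miss costs $O(d) = O(\alpha)$ local work assuming the recursive subcalls hit the cache.

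On each update, the orientation modifies the out-neighborhoods of only $\poly(\log n)$ vertices; I invalidate the corresponding cache entries lazily via a per-vertex epoch tag, in $\poly(\log n)$ worst-case per update. The amortized $O(\alpha)$ query bound then follows by charging each cache miss (worth $O(\alpha)$) either to the initial state or to an update that caused it, so that the total query work across $U$ updates and $Q$ queries is $O\bigl(\alpha \cdot (n + U \cdot \poly(\log n)) + \alpha Q\bigr)$, amortizing to $O(\alpha)$ per query in the long run. The main obstacle is showing that a single orientation change can be made to invalidate only $\poly(\log n)$ cache entries, despite the fact that in principle a color change at one vertex propagates arbitrarily far upward through its in-neighbors in $\vec{G}$. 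My plan for this step is to exploit the hierarchical level structure of the orientation of~\cite{christiansen2022adaptive}: by carefully ordering validity checks along levels and stamping entries with the maximum epoch across their downward recursion, cascading invalidations can be captured lazily at query time, keeping the worst-case update cost within the $\poly(\log n)$ budget and the amortized query cost at $O(\alpha)$.
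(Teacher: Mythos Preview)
Your approach has a real gap that the paper's method is specifically designed to avoid. Your coloring rule $c(v) = \min\{i : i \notin c(N^+(v))\}$ forces you, on a cache miss at $v$, to recurse into \emph{every} out-neighbor of $v$, then into every out-neighbor of those, and so on down to the sinks. On any graph where the out-reach of the queried node is large (e.g., a long directed path), a single query on a cold cache triggers $\Theta(n)$ cache misses and $\Theta(\alpha n)$ work; your own accounting gives $O(\alpha(n + U\cdot\poly(\log n)) + \alpha Q)$ total, which is not $O(\alpha)$ amortized per query in the standard sense. The same obstruction reappears after updates: you correctly flag that one orientation flip can cascade color changes through arbitrarily many in-neighbors, but epoch stamps do not fix this---either a query must traverse the full out-reach of $v$ to certify that no descendant's epoch has advanced, or it must recompute along that reach, and either way the per-query work is unbounded. (The acyclicity step is also unjustified: the orientation of~\cite{christiansen2022adaptive} may contain directed cycles, and reorienting by a ``(level, id)'' order does not in general preserve the out-degree bound.)

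The paper avoids all of this by \emph{not} requiring out-neighbors of $v$ to be colored before $v$. After an update it starts fresh; on a query at $u$ it builds a set $V^*\ni u$ by recursing to an out-neighbor $w$ \emph{only when $w$ has already accumulated $\geq 6d$ colored in-neighbors}. A potential argument (counting arcs from colored to uncolored nodes) then shows that over any $i$ queries the total size of all $V^*$'s is $O(i)$, so total work is $O(\alpha i)$. Each $V^*$ is colored in one shot via the degeneracy ordering of $G[V^*]$, with every node avoiding at most $6d$ previously colored in-neighbors, $d$ out-neighbors, and $2\alpha-1$ later neighbors inside $V^*$. The key idea missing from your proposal is precisely this decoupling: decide \emph{whom} to color using in-degree pressure (which amortizes), and only afterward pick the colors locally, rather than letting the color rule itself drive an unbounded out-recursion.
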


\section{Preliminaries}
We use two subroutines from prior work: One is for computing a low-outdegree orientation in the dynamic setting. The other is the random vertex partitioning, which shows that, for the randomized coloring algorithm, it suffices to work on graphs with arboricity at most $O(\log n)$. We next state these. 

\begin{theorem} [Christiansen, Holm, Van der Hoog, Rotenberg, Schwiegelshohn~\cite{christiansen2022adaptive}]
\label{thm:orientation}
There is a deterministic explicit dynamic algorithm that computes an orientation with out-degree at most $d=O(\alpha)$, with update time $\poly(\log n)$. Here, $\alpha$ is the arboricity of the graph getting oriented at each time.    
\end{theorem}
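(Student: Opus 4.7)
The plan is to maintain a level function $\ell : V \to \{0,1,\dots,L\}$ with $L = O(\log n)$, and orient each edge from the endpoint of lower level to the endpoint of higher level (with ties broken by some consistent rule, e.g., vertex ID). The structural invariant I would enforce is that every vertex $v$ with $\ell(v) > 0$ has at most $c \cdot \alpha$ neighbors at levels strictly greater than $\ell(v)$, while vertices at level $0$ have small out-degree by construction. Because an edge is oriented out of $v$ only toward higher-level (or ID-larger, same-level) endpoints, this directly bounds the out-degree of $v$ by $O(\alpha)$. The feasibility of such an assignment follows from a Nash--Williams-style counting argument: if too many vertices were forced above a given level, the induced subgraph would have average degree exceeding $2\alpha$, contradicting arboricity $\alpha$.

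On an edge insertion $(u,v)$, I would first orient it according to the current levels. If this pushes some endpoint, say $v$, above its out-degree budget, promote $v$ one level up; this flips the orientation of some edges incident to $v$ but may in turn violate the invariant at some neighbors, triggering a cascade of further promotions. On an edge deletion, symmetric demotions are performed. A standard potential-function argument, with potential $\Phi = \sum_v \ell(v)$, shows that the total amortized number of level changes per update is $O(\log n)$, and each level change touches $O(\alpha)$ neighbors, yielding amortized $\poly(\log n)$ update time when $\alpha \leq \poly(\log n)$.

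The hard part is converting this amortized bound into a worst-case guarantee. The plan here is to \emph{de-amortize by scheduling}: rather than executing the full cascade of promotions/demotions triggered by one update immediately, I would enqueue the pending level changes in a balanced-BST-style priority structure indexed by level, and process $O(\log n)$ pending changes per update. Correctness at all intermediate times is ensured by running with slack in the out-degree budget (using $2c \cdot \alpha$ as the ``hard'' threshold while the invariant targets $c \cdot \alpha$), so that partially processed cascades still certify $O(\alpha)$ out-degree. The key technical challenge, and the step I expect to dominate the work, is proving that pending work never accumulates beyond $\poly(\log n)$ per vertex: this requires charging long cascades against the many prior updates that injected potential into $\Phi$, together with a careful argument that the slack suffices to absorb the delayed promotions without ever violating the invariant.

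Finally, to handle the fact that $\alpha$ itself changes over time, I would run $O(\log n)$ instances of the data structure in parallel, one per geometrically-growing guess of $\alpha$, activating/deactivating instances as coarse density estimates (easily maintained in $\poly(\log n)$ time) cross thresholds; the overall answer is read off from the smallest active instance. Since only $O(1)$ instances are active at any time and each instance has $\poly(\log n)$ worst-case update time, this preserves the final bound claimed in the theorem.
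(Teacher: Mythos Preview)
The paper does not prove this theorem. It appears in the Preliminaries section purely as a black-box import from the cited work of Christiansen, Holm, Van der Hoog, Rotenberg, and Schwiegelshohn, and is invoked without proof in the proofs of \Cref{thm:amortized} and \Cref{thm:main}. There is nothing here to compare your proposal against.

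That said, as a standalone argument your sketch has a real gap. You yourself note that a level change ``touches $O(\alpha)$ neighbors, yielding amortized $\poly(\log n)$ update time when $\alpha \leq \poly(\log n)$,'' but the theorem asserts $\poly(\log n)$ update time for \emph{all} $\alpha$, and your parallel-instances trick at the end addresses only the problem of $\alpha$ changing over time, not the problem that your per-update cost scales with $\alpha$. Getting the update time to be $\poly(\log n)$ independent of $\alpha$ requires bounding the number of edge \emph{reorientations} per update directly and paying $O(\log n)$ per reorientation, rather than paying $O(\alpha)$ per promoted vertex. Your de-amortization paragraph is also only a plan: the assertion that ``pending work never accumulates beyond $\poly(\log n)$ per vertex'' under adversarial update sequences is exactly the technical crux of such results, and the slack-plus-potential outline you give does not establish it.
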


The following is a streamlined version of Lemma 2.5 of Christiansen, Nowicki, and Rotenberg~\cite{Christiansen23}, which shows that, for query time purposes, it suffices to solve the coloring on graphs with arboricity at most $O(\log n)$. The intuitive reason is that one can use a random partitioning of the vertices into $\alpha/\log n$ parts, which are colored with separated colors, and each part induces a graph with arboricity $O(\log n)$. Please see ~\cite[Lemma 2.5]{Christiansen23} for details.
\begin{theorem} [Christiansen, Nowicki, and Rotenberg~\cite{Christiansen23}]
\label{thm:partition}
Assume that there exists a dynamic algorithm $\mathcal{A}$ that maintains an $O(\alpha)$ implicit coloring of the graph in $\poly(\log n)$ update time and $\poly(\alpha \log n)$ query time, against an oblivious adversary and with high probability, where $\alpha$ is the arboricity of the current graph. Then, there exists a dynamic algorithm $\mathcal{B}$ that maintains an implicit $O(\alpha)$ coloring in $\poly(\log n)$ update time and $\poly(\log n)$ query time, against an oblivious adversary and with high probability, where $\alpha$ is the arboricity of the current graph.   
\end{theorem}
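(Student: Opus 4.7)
The plan is to reduce the $\poly(\alpha \log n)$ query time of $\mathcal{A}$ to $\poly(\log n)$ by randomly partitioning the vertex set into roughly $\alpha/\log n$ parts so that each induced subgraph has arboricity $O(\log n)$. Running $\mathcal{A}$ on each part with disjoint color ranges then yields a global $O(\alpha)$ coloring whose queries only ever touch a single sub-instance, on which $\mathcal{A}$ runs in $\poly(\log n)$ time.

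Since the arboricity $\alpha$ changes over time and is not known in advance, I would run $O(\log n)$ parallel copies of the construction, one per guess $\alpha_i = 2^i$ for $i \in \{0,1,\dots,\lceil \log_2 n \rceil\}$. For each guess, sample up front a hash function $h_i \colon V \to [k_i]$ with $k_i = \max\{1,\lceil \alpha_i/(c\log n)\rceil\}$ from a $\Theta(\log n)$-wise independent family; fixing these hash functions at time zero is sound because correctness is only required against an oblivious adversary. For each pair $(i,j)$, maintain an instance $\mathcal{A}_{i,j}$ on the dynamic subgraph $G[V_{i,j}]$ where $V_{i,j} := h_i^{-1}(j)$, forwarding an edge update $(u,v)$ of $G$ to $\mathcal{A}_{i,j}$ only when $h_i(u)=h_i(v)=j$. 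Each original update thus triggers at most $O(\log n)$ downstream updates, so the overall update time stays $\poly(\log n)$.

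The core technical claim is that, at the ``correct'' scale $i^\star$ for which $\alpha_{i^\star}$ is a constant-factor overestimate of the true arboricity $\alpha$, every induced subgraph $G[V_{i^\star,j}]$ has arboricity $O(\log n)$ with high probability. By Nash--Williams it suffices to show that $|E[S]| \le O(\log n)\,(|S|-1)$ for every $S \subseteq V_{i^\star,j}$. For any fixed $S' \subseteq V$, each vertex of $S'$ lands in bin $j$ with probability $1/k_{i^\star}$, so $\E[\,|E[S' \cap V_{i^\star,j}]|\,] \le |E[S']|/k_{i^\star}^2 \le \alpha|S'|/k_{i^\star}^2$; a Chernoff-type deviation bound enabled by the $\Theta(\log n)$-wise independence of $h_{i^\star}$ then controls the density on each $S'$. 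The main obstacle is the union bound over the exponentially many candidate sets $S'$; this is handled by grouping the $S'$ by size and pairing the $\binom{n}{s}$-sized class of $s$-element sets with the tighter Chernoff tails available at that scale, which is exactly the calculation carried out in Lemma~2.5 of~\cite{Christiansen23}.

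To answer a query for $v$, obtain in $\poly(\log n)$ time a constant-factor estimate $\hat\alpha$ of the current arboricity (for example from the maximum outdegree maintained by \cref{thm:orientation}), set $i^\star := \lceil \log_2(C\hat\alpha)\rceil$ for a suitable constant $C$, and return the color produced by $\mathcal{A}_{i^\star, h_{i^\star}(v)}$ shifted into the $h_{i^\star}(v)$-th block of $O(\log n)$ colors. Because $\mathcal{A}_{i^\star,j}$ runs on a subgraph of arboricity $O(\log n)$, its query time is $\poly(\log n \cdot \log n)=\poly(\log n)$, and the total palette used across parts at scale $i^\star$ is $k_{i^\star} \cdot O(\log n) = O(\alpha)$, completing the reduction.
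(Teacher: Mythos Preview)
The paper does not give its own proof of this statement: \Cref{thm:partition} is quoted as a black box from \cite{Christiansen23}, accompanied only by the one-sentence intuition that a random partition of the vertices into roughly $\alpha/\log n$ parts makes each induced subgraph have arboricity $O(\log n)$. Your sketch follows exactly this idea and supplies sensible implementation details that the paper omits (the $O(\log n)$ parallel scales to handle an unknown and time-varying $\alpha$, fixing the hash functions up front to work against an oblivious adversary, and reading off an arboricity estimate from the maintained orientation).

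One remark on your ``core technical claim'': routing the argument through the Nash--Williams density characterization and a union bound over all $S'\subseteq V$ is needlessly delicate. To beat the $\binom{n}{s}$ sets of each size $s$ you would need very sharp tails, and you must simultaneously control $|S'\cap V_{i^\star,j}|$ along with the edge count; it is not clear this goes through with only $\Theta(\log n)$-wise independent hashing, and you end up deferring the whole calculation back to the very lemma you are trying to prove. A much cleaner route is available and meshes with the tools already in the paper: take the global $O(\alpha)$-outdegree orientation from \Cref{thm:orientation} (which you already invoke to estimate $\alpha$) and observe that each vertex retains in expectation $O(\alpha)/k_{i^\star}=O(\log n)$ of its out-neighbors inside its own bin. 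A single Chernoff-type bound per vertex and a union bound over the $n$ vertices---rather than over $2^n$ subsets---yields an $O(\log n)$-outdegree orientation of every $G[V_{i^\star,j}]$, hence arboricity $O(\log n)$. This version also sits comfortably within $\Theta(\log n)$-wise independence, since each tail bound involves only the $O(\alpha)$ out-neighbors of a single vertex and we only need polynomially small failure probability.
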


\section{Our Algorithms}
\label{sec:Algorithm}
We first overview the general setup, which is common to all our algorithms. Our dynamic color algorithms make use of known schemes for low-outdegree orientation. Our novelty is in computing a coloring, given these orientations. The general setup in our implicit coloring algorithms is as follows: we assume that the algorithm has access to an orientation with outdegree at most $d$, for some $d= O(\alpha)$ and $d\geq 2$, which is maintained dynamically as the graph undergoes changes. Concretely, the work of Christiansen et. al. provides a dynamic orientation with such an outdegree, which is maintained with $\poly(\log n)$ worst-case update time~\cite{christiansen2022adaptive}. We denote the resulting directed graph by $G=(V, A)$, and we concretely assume that for each node we have the list of its up to $d$ outgoing edges. Having access to this low outdegree orientation, the coloring algorithm is given a sequence of queries in an online fashion. Each query consists of a single node in $G$. The algorithm answers the query by assigning the queried node $v$ some color from the set $[O(\alpha)]$. The color assignments are such that if one queries all the nodes in an arbitrary order, then the resulting coloring is a proper coloring in the sense that no two neighboring nodes have the same color.

The rest of this section consists of three subsections. The first subsection describes our algorithm with an amortized query time. The second subsection provides an intuitive discussion of our algorithm that achieves a worst-case query time. The third subsection presents the actual algorithm and its analysis.

\subsection{Coloring with Amortized Query Time}
Next, we describe an extremely simple implicit coloring algorithm that given an orientation with outdegree $d$, for some $d= O(\alpha)$ and $d\geq 2$, computes an $O(\alpha)$ coloring with amortized query time $O(\alpha)$. Adding the $\poly(\log n)$ worst-case update time of computing an orientation~\cite{christiansen2022adaptive}(\Cref{thm:orientation}), this gives a dynamic algorithm for $O(\alpha)$ implicit coloring with $\poly(\log n)$ update time and $O(\alpha)$ amortized query time, thus proving \Cref{thm:amortized}. Furthermore, by adding the known random partitioning idea (\Cref{thm:partition}), which effectively brings down the arboricity to $O(\log n)$, one can get an $O(\alpha)$ implicit coloring algorithm against an oblivious adversary, with $\poly(\log n)$ update time and $\poly(\log n)$ amortized query time.

\paragraph{Algorithm Framework} Let $V^{colored} \subseteq V(G)$ denote the subset of vertices that already received their permanent colors when we processed the previous queries. In particular, we assume that all previously queried nodes (before any edge updates) are contained in $V^{colored}$; however, $V^{colored}$ can contain some extra vertices. We maintain the invariant that, after finishing every query, for each uncolored node, the total number of its incoming neighbors that are in $V^{colored}$ is at most $6d$.\footnote{We could set this $6d$ value to be $1.001d$ for this amortized algorithm, and the analysis would still work. We choose the value $6d$ simply to keep the presentation consistent with that of our worst-case algorithm, which is discussed in the next section.} Also, for each node (which is not in $V^{colored}$), we keep the list of its at most $6d$ incoming neighbors that are already colored.

Now, we describe how we process the next query at node $u$. If node $u$ is already in  $V^{colored}$, we simply output its computed color. If not, we first compute a subset $V^*$ of vertices to be colored, including $u$, and then we compute a coloring of all nodes of $V^*$, therefore adding them to $V^{colored}$. We compute the set $V^*$ in a manner that allows us to keep the invariant that if a node $v$ did not get a permanently assigned color during previous queries, then at most $6d$ of its in-neighbors got assigned a color. In particular, we maintain an invariant that for any $v\notin V^{colored}\cup V^*$, the number of its in-neighbors in $V^{colored}\cup V^*$ is at most $6d$.

Besides $V^*$, the algorithm also maintains a set $A^p \subseteq A$ of \textit{processed arcs}. We reset this set to empty when we make the first color query after an edge update (and reorientation)\footnote{For now, think that we take a fresh empty set $A^{p}$ with every edge change that results in a reorientation. This might seem to cause a memory that grows with the number of edge updates, but one can cut that to memory asymptotically linear in the number of edges in the graph.}. We keep an additional invariant that an arc $(v,w)$ is in $A^p$ if and only if $v$ is in $V^{colored} \cup V^*$. Each node that is not in $V^{colored} \cup V^*$ keeps a list of its incoming arcs in $A^{p}$ (as well as the size of this list).

Next, we first discuss how we color $V^*$ (this helps also in understanding the relevance of this invariant), and then we explain how we compute $V^*.$



\smallskip
\paragraph{How do we color $V^*$?} Suppose a set $V^*\subseteq (V\setminus V^{colored})$ has already been specified and our task is to compute a coloring of all vertices in $V^*$. In particular, from the invariants about the previous queries, we know that each node in $V^*$ has at most $6d$ in-neighbors in $V^{colored}$. 
First, we identify all the edges of the subgraph $G[V^*]$ induced by $V^*$ in $O(|V^{*}| d)$ time, by examining the outgoing edges of $V^*$ and keeping only edges that end in another node in $V^*$. Now, notice that this subgraph $G[V^*]$ has arboricity at most $\alpha$ and therefore degeneracy at most $2\alpha - 1$. Thus, we can order the vertices in $V^*$ in such a way that each node has at most $2\alpha - 1$ neighbors that are after it in the order. For that, we use a known static algorithm of Matula and Beck~\cite{matula1983smallest}, which runs in time $O(|V^*| + |E[V^*]|)$. Then, we color the vertices in reverse of this order, each time taking into account the at most $6d+d+(2\alpha-1)<9d$ neighbors that have been colored before. Here, $6d$ comes from incoming neighbors outside $V^*$ which were colored during the processing of previous queries, $d$ comes from outgoing neighbors outside $V^*$ colored before, and $2\alpha-1$ comes from the neighbors in $V^*$ appearing after the node in the order of $V^*$ and thus colored earlier.

\smallskip
\paragraph{How do we compute $V^*$}
The set $V^*$ is computed by a recursive algorithm. The initial input is the node $u$ which has been queried for its color. In general, the input to the recursive algorithm is a node $v$ that is not in $V^{colored}\cup V^*$.


Whenever we have a recursive call on node $v\notin V^{colored}\cup V^*$, we do as follows. First, we add $v$ to $V^*$. Then, we iterate through the outgoing arcs of $v$. For each arc $(v,w)$ outgoing from $v$ to a node $w$, we do as follows. We first add $(v, w)$ to $A^{p}$. If $w$ is in $V^{colored}\cup V^{*}$, we are done with processing this arc. Suppose that $w$ is not in $V^{colored}\cup V^{*}$. Then, we first add $(v, w)$ to the list of processed arcs incoming to $w$. Next, if $|A^{p}\cap IN(w)|\geq 6d$, where $IN(w)$ is the set of arcs incoming to $w$, then we invoke a recursion on node $w$. 

Notice that, modulo this recursive call, processing each arc is done in $O(1)$ time. Therefore, the overall complexity of the recursive algorithm is $O(|V^*|d)$. Also, note that, after the recursive algorithm has finished, we have the invariant that an arc $(v,w)$ is in $A^p$ if and only if $v$ is in $V^{colored} \cup V^*$, and each node that is not in $V^{colored} \cup V^*$ has a list of its incoming arcs in $A^{p}$.

\begin{lemma}\label{lem:amortized} The above algorithm computes a correct coloring, with amortized query time $O(d)$.
\end{lemma}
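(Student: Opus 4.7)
The plan is to maintain two invariants inductively: (i) for every node $w \notin V^{colored}$, the set $V^{colored}$ contains at most $6d$ of $w$'s in-neighbors; and (ii) $(u,v) \in A^{p}$ iff $u \in V^{colored} \cup V^{*}$, so the counter $|A^{p} \cap IN(w)|$ maintained at $w$ equals $|IN(w) \cap (V^{colored} \cup V^{*})|$ at all times during a query. The recursive procedure preserves (i) by construction: any $w \notin V^{colored} \cup V^{*}$ whose counter reaches $6d$ is immediately absorbed into $V^{*}$, so at termination every remaining uncolored node has counter $<6d$; once $V^{*}$ is merged into $V^{colored}$ at the end of the query, invariant (i) is re-established. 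From (i) one derives correctness of the coloring step. When a vertex $v \in V^{*}$ is colored in reverse Matula--Beck degeneracy order of $G[V^{*}]$, its already-colored neighbors split into three disjoint groups: in-neighbors in $V^{colored} \setminus V^{*}$ (at most $6d$ by (i)), out-neighbors in $V^{colored} \setminus V^{*}$ (at most $d$ by the outdegree bound), and neighbors inside $V^{*}$ that appear later in the degeneracy order (at most $2\alpha-1$). The total is less than $9d = O(\alpha)$, so a palette of $9d$ colors always admits a free choice for $v$; and since every edge is resolved at its later-processed endpoint, the resulting coloring is proper.

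\textbf{Amortized query time.} A query landing on an already-colored node costs $O(1)$; a query that launches the recursion costs $O(|V^{*}| \cdot d)$, because scanning out-arcs and updating counters inside the recursion is $O(d)$ per added vertex, and the Matula--Beck ordering plus greedy palette search on $G[V^{*}]$ is also $O(d)$ per vertex. I would amortize over a maximal stretch of $q$ queries between two consecutive resets of $A^{p}$, of which $q' \le q$ trigger the recursion. Let $C$ be the set of nodes added to $V^{colored}$ during the stretch, and split $C = C_{q} \sqcup C_{t}$ where $|C_{q}| = q'$ is the queried nodes and $C_{t}$ is the rest (added purely via recursive calls). Two complementary bounds on $|A^{p}|$ at the end of the stretch close the argument. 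On one hand, every arc in $A^{p}$ has its tail in $C$ and each node in $C$ contributes at most $d$ out-arcs, so $|A^{p}| \le |C| \cdot d$. On the other hand, each $v \in C_{t}$ witnessed $|A^{p} \cap IN(v)| \ge 6d$ at the moment it was absorbed, and these witness sets are disjoint across distinct $v$'s (since they consist of in-arcs to different heads), so $6d \cdot |C_{t}| \le |A^{p}|$. Combining gives $6d|C_{t}| \le (q' + |C_{t}|)d$, hence $|C_{t}| \le q'/5$ and $|C| \le 6q'/5 \le 6q/5$. The total work during the stretch is therefore $O(|C| \cdot d) = O(q \cdot d)$, yielding $O(d)$ amortized per query.

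\textbf{Main obstacle.} The one subtle point is the interaction with edge updates, because the amortization runs within a single stretch between two consecutive resets of $A^{p}$, and invariant (i) must be re-established under the new orientation before any further query is served. I would discharge this by having the reset phase rebuild the per-node in-arc lists from the current $V^{colored}$ and, if any uncolored node's new count exceeds $6d$, launching a one-off recursive expansion of $V^{colored}$; the same $6d$-budget argument keeps this initialization bounded by $O(|C| \cdot d)$, which is absorbable into the $\poly(\log n)$ worst-case update budget of the orientation subroutine of \Cref{thm:orientation}. With invariant (i) restored after each reset, the per-stretch bound composes additively across stretches, and the $O(d)$ amortized query time holds globally.
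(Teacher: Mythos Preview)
Your correctness argument and your amortized bound are both correct and track the paper's proof closely. The only cosmetic difference in the amortization step is that the paper phrases it via a potential function (the number of arcs in $A^p$ whose head is still uncolored: each initial call raises it by at most $d$, each recursive call drops it by at least $5d$), whereas you double-count $|A^p|$ directly at the end of a stretch. Your two inequalities $6d\,|C_t| \le |A^p| \le |C|\cdot d$ are exactly the paper's potential bookkeeping unwound, and both routes give $|C_t|\le q'/5$ and hence $|C|\le 6q'/5$.

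Your ``Main obstacle'' paragraph, however, addresses a non-issue. In the paper's model both $A^p$ \emph{and} $V^{colored}$ are reset after every edge update; the proof of the lemma explicitly argues ``starting with an initially uncolored graph,'' and the invariant that $(v,w)\in A^p$ iff $v\in V^{colored}\cup V^*$ would already be broken if $A^p$ were emptied while $V^{colored}$ persisted. So invariant~(i) is vacuously restored at each reset and no one-off expansion is needed. Your proposed fix for this non-problem is also not justified as stated: if $V^{colored}$ \emph{did} persist across a reorientation, merely rebuilding the in-lists would cost $\Theta(|V^{colored}|\cdot d)$, and the subsequent recursive expansion has nothing to charge against, so the claim that this is ``absorbable into the $\poly(\log n)$ worst-case update budget'' does not follow from the $6d$-budget argument you invoke.
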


\begin{proof} Note that the algorithm terminates, because, with each recursion on a node $v$, we add $v$ to $V^*$ and therefore never recurse on $v$ again (as recursion is invoked only on nodes not in $V^{colored}\cup V^*$). For correctness, it just remains to argue that after the recursive algorithm terminated, for any $v\notin V^{colored}\cup V^*$, the number of its in-neighbors in $V^{colored}\cup V^*$ is at most $6d$. For the sake of contradiction, suppose the number is at least $6d+1$. Since we had the invariant from processing previous queries that $v$ had at most $6d$ neighbors in $V^{colored}$, there must have been at least one arc incoming to $v$ that got processed during this color query. Let $(v', v)$ be the last such arc that got processed. Then, when we read the size$|A^{p}\cap IN(v)|$, we would see that it is at least $6d+1$ and we would trigger the recursion on $v$, hence it gets added to $V^*$, contradicting the assumption that $v\notin V^{colored}\cup V^*$.

It remains to argue that the amortized query time is $O(d)$. Notice that, as argued above, for each color query, computing the set $V^*$ and coloring it takes $O(|V^*|d)$ time. Hence, it suffices to show that, starting with an initially uncolored graph, for any $i\in \mathbb{N}$, querying $i$ nodes for their colors results in at most $O(i)$ nodes that got added to $V^*$ and then to $V^{colored}$. 

Let us use as a potential the number of arcs $(v,w) \in A^p$ where $v\in V^{*}\cup V^{colored}$ and $w\notin V^{*}\cup V^{colored}$.\footnote{We note that a similar potential was used in the analysis of the randomized $O(\min\{\alpha \log \alpha, \alpha \log\log\log n\})$ implicit coloring algorithm of Christiansen et al.~\cite{Christiansen23}.} As the out-degree of each node is at most $d$, the potential can increase by at most $d$ when we add a node to $V^{*}\cup V^{colored}$. When a node $w$ gets a recursive call invoked on it, and thus added to $V^{*}\cup V^{colored}$, this reduces the potential by $6d$ and the subsequent processing of its outgoing arcs increases the potential by at most $d$. Hence, the recursive call on $w$ has a net effect of a potential reduction of at least $5d$. Thus, querying $i$ nodes for their colors has added at created at most $i/5$ recursive calls, besides the at most $i$ initial calls issued because of the query. Hence, the total number of nodes added to $V^*\cup V^{colored}$ is at most $6i/5$, which finishes the proof.
\end{proof}

\begin{proof}[Proof of \Cref{thm:amortized}]
The proof follows directly by combining two ingredients: (1) the orientation algorithm of Christiansen et al.~\cite{christiansen2022adaptive}, as summarized in \Cref{thm:orientation}, which dynamically keeps an $O(\alpha)$ orientation with update time $\poly(\log n)$. (2) the algorithm described above that, given an orientation with outdegree at most $d=O(\alpha)$, computes an implicit $O(d)$ coloring of the graph with amortized query time $O(d)$, as proven in \Cref{lem:amortized}. 
\end{proof}

As a side remark, we note that by combining \Cref{thm:amortized} with the random vertex partitioning idea as summarized in \Cref{thm:partition}, one gets a dynamic algorithm for $O(\alpha)$ implicit coloring with $\poly(\log n)$ update time and amortized $\poly(\log n)$ query time, against an oblivious adversary. This already improves over the algorithm of Christiansen et al.~\cite{Christiansen23} that computed an $O(\min\{\alpha \log \alpha, \alpha \log\log\log n\})$ implicit coloring with $\poly(\log n)$ update time and $\poly(\log n)$ amortized query time.

\subsection{An Intuitive Discussion of Our Worst-Case Algorithm}
Here, we discuss the intuition behind \Cref{thm:main}, which provides a randomized dynamic algorithm that computes an $O(\alpha)$ implicit coloring in $\poly(\log n)$ worst-case update and query times, against an oblivious adversary. In particular, we discuss a randomized algorithm with worst-case query time $\poly(\alpha \log n)$, which put together with the random partitioning (\cref{thm:partition}) implies a $\poly(\log n)$ query time. The actual algorithm and its analysis are presented in the next subsection.

\paragraph{Intuitive Discussion about Deamortization} The general framework of our worst-case algorithm is the same as the one described in the previous subsection, and we also use exactly the same procedure to color nodes of $V^*$. The only difference is in how the set $V^*$ is computed. Recall that we refer to the arcs in $A^p$ as processed arcs.

One reason why the algorithm above only gives an amortized guarantee is the following: It uses a fixed threshold on the number of processed incoming arcs in order to decide whether to invoke a recursion on an outgoing neighbor $w \notin V^* \cup V^{colored}$. This can lead to a large number of recursions being triggered by a single call to the algorithm computing $V^*$, e.g., if many of the nodes have their number of processed incoming arcs just barely below this threshold. How can we avoid this and obtain a worst-case guarantee?

A natural idea to prevent the issue is to use randomness to decide whether to trigger a recursion or not, with the hope that it is unlikely that we cause a lot of recursions together. A natural attempt in that regard would be to use uniform probabilities: perform a recursive call on an outgoing neighbor $w \notin V^* \cup V^{colored}$ with probability $p$, for some fixed $p$. Unfortunately, this approach does not work, as we explain next. If $p \geq \frac{10}{d}$, then a node with $d$ out-neighbors would perform up to $10$ recursive calls, in expectation, and it is easy to construct an example where the expected number of recursive calls on an initially uncolored graph is much more than $\poly(\alpha\log n)$. On the other hand, for any $p < \frac{100}{d}$, there is a positive constant probability that a node $w \notin V^* \cup V^{colored}$ has more than $1000d$ incoming arcs that have already been processed, and thus we would not be able to maintain the invariants. Intuitively, the issue with using a uniform probability is that it does not take into account how urgent it is to add $w$ to $V^*$, where the urgency increases with an increasing number of incoming arcs that have been processed. 

\paragraph{Our Worst-Case Algorithm}
Our worst-case algorithm triggers a recursion on an outgoing neighbor $w \notin V^* \cup V^{colored}$ with probability $\frac{1}{6d+1 - |A^p \cap IN(w)|}$ where $IN(w)$ denotes the set of arcs pointing towards $w$. In particular, the probability of invoking a recursion increases with the number of incoming arcs that have been processed. Thus, our algorithm can be seen as interpolating between the amortized approach that uses a fixed threshold on the number of processed arcs to decide whether to trigger a recursion and the approach that triggers a recursion with uniform probability.
Note that if $w$ has $6d$ incoming arcs that have been processed, then a recursion is triggered with probability $1$. Thus, computing $V^*$ in this manner still maintains the same invariants as before. The main difficulty is to show that the size of $V^*$ is small at all times, with high probability.

\paragraph{Analysis}
Our goal is to show that, in one color query, the number of recursive calls performed by the algorithm computing $V^*$ is $O(\log n)$, with high probability. Note that this guarantee does not hold if the set of previously colored nodes $V^{colored}$ is provided by an adversary, even if the adversary ensures that each node not in $V^{colored}$ has at most $6d$ in-neighbors in $V^{colored}$.

However, if one starts with an initially uncolored graph and queries the color of all the nodes in the graph, then with high probability the size of $V^*$ will be upper bounded by $O(\log n)$ throughout the execution of the algorithm, even if the queries are performed in an adaptive manner. 

\paragraph{Part I: A Simpler Case}
As a warm-up, let us first analyze the variant where we perform a recursive call with some fixed probability $p$, concretely $p = \frac{1}{6d}$. For this variant, it is strictly easier to show that the algorithm performs $O(\log n)$ recursive calls with high probability, and this guarantee even holds starting with an arbitrary subset of colored nodes. Assume that the algorithm performs more than $100\log(n)$ recursive calls and let $u$ be the initial input node that was queried for its color. Then, there exists a certificate $A^c \subseteq A$ with $|A^c| = \lceil100\log(n)\rceil$ such that for each arc $(v,w)$ in the certificate, $v$ performed a recursive call on $w$. Note that each such potential certificate $A^c$ has the property that it induces a directed tree rooted at $u$, where all arcs are oriented away from the root. Using this simple observation, the number of potential certificates can be upper bounded by $(4d)^{\lceil100\log(n)\rceil}$ using a standard counting argument. Moreover, the probability that some arbitrary subset $A' \subseteq A$ of size $\lceil100\log(n)\rceil$ is actually a certificate is at most $p^{\lceil100\log(n)\rceil}$, as each recursion is triggered independently with probability $p$. Thus, a simple union bound implies that the algorithm performs $O(\log n)$ recursive calls with probability $1 - \left(\frac{4d}{6d}\right)^{\lceil100\log(n)\rceil} = 1 - 1/\poly(n)$, as needed. We note that the argument above is not new; an analysis very similar to this was first used by Beck~\cite{beck1991algorithmic}, and variants of it can be found throughout the literature.

\paragraph{Part II: The Full Case} Now, let us analyze our actual algorithm. Recall that our algorithm performs a recursive call on a node $w \notin V^* \cup V^{colored}$ with probability $\frac{1}{6d + 1 - |A^p \cap IN(w)|}$.
The analysis is similar to the one above, with the only difference being how we show that a given set $A' \subseteq A$ is a certificate with probability at most $\left(\frac{1}{6d} \right)^{\lceil100\log(n)\rceil}$.
More precisely, consider the setting from above where no node is colored at the beginning and where all of the nodes are queried in an arbitrary order, where the order might even be chosen adaptively.
Then, for a fixed subset $A' \subseteq A$, it happens with probability at most $\left(\frac{1}{6d}\right)^{|A'|}$ that for every $(v,w) \in A'$, node $v$ invokes a recursive call on node $w$ while processing one of the queries.
The main intuition why this is the case is as follows: Fix some $i \in \{0,1,\ldots,6d-1\}$ and some node $w$. Then, the probability that $w$ gets called recursively at the time when it has exactly $i$ incoming arcs that have been processed is at most

\[\frac{6d-1}{6d}\cdot \frac{6d-2}{6d-1} \cdot \ldots \frac{6d+1-i}{6d+1-i +1} \cdot \frac{1}{6d+1-i} = \frac{1}{6d}.\]

This follows because for this event to happen, for every $j \in \{1,2,\ldots,i\}$, while processing the $j$-th arc incoming to $w$, a biased coin coming up heads with probability $\frac{1}{6d+1-j}$ was thrown to decide whether to invoke a recursion on $w$, but the coin came up tails, and while processing the incoming arc that triggered the recursion, a biased coin coming up heads with probability $\frac{1}{6d+1-i}$ was thrown, and the coin actually came up heads.

There are two issues that have not been handled. We mention these here, but leave the resolutions to the next subsection, where we discuss the algorithm and its actual analysis. (1) The argument above only gives that the probability is at most $\frac{1}{6d}$ that a given node $w$ is called recursively at the time it has $i$ incoming arcs that have been processed. This however does not directly imply that for a fixed arc $(v,w)$, with probability at most $\frac{1}{6d}$, node $v$ invokes a recursion on $w$. (2) Moreover, and even more importantly, for two arcs $(v_1,w_1)$ and $(v_2,w_2)$, the event that $v_1$ invokes a recursion on $w_1$ is not independent of the event that $v_2$ invokes a recursion on $w_2$. We see the details of how to handle these issues in the next subsection.

\subsection{Coloring with Worst-Case Query Time}
In this subsection, we describe our randomized algorithm that computes an $O(\alpha)$ implicit coloring in $\poly(\log n)$ update time and $\poly(\alpha \log n)$ worst-case query time. Putting this together with the known randomized partitioning idea, we get an $O(\alpha)$ arboricity in $\poly(\log n)$ worst-case update and query times. We assume here that we are given a $d$ out-degree orientation with $d = O(\alpha)$. We make no further assumptions on the out-degree orientation, and in particular, the orientation may include cycles.

\paragraph{Algorithm} The general framework of our worst-case algorithm and all the invariants are the same as the one described in the previous subsection, and we also use exactly the same procedure to color nodes of $V^*$. The only difference is in how the set $V^*$ is computed. To have a small worst-case complexity, it is critical that we now keep the property that $V^*$ is never too large (during one color query). For that, we change the rule for which vertices are added to $V^*$ recursively.

When we have a recursive call on node $v$ (including the first call caused by a color query on $v$), we do as follows. First, we add $v$ to $V^*$. Then, we process all arcs outgoing from $v$. For each arc $(v,w)$ outgoing from $v$, we first add $(v, w)$ to $A^{p}$. If $w$ is in $V^{colored}\cup V^{*}$, we are done with processing this arc. Suppose that $w$ is not in $V^{colored}\cup V^{*}$. Then, we first add $(v, w)$ to the list of processed arcs incoming to $w$. Then, we toss a coin with heads probability $\min\{\frac{1}{6d+1-|A^{p}\cap IN(w)|}, 1\}$, where $IN(w)$ indicates the set of arcs incoming to node $w$. If the coin came out heads, then we invoke a recursion on node $w$. If the coin was tails,  we do not recurse on $w$ at this point. We emphasize that in this latter case, we might decide to recurse on $w$ later when we process other arcs incoming to $w$. Note that the general scheme in this recursive algorithm is exactly the same as our amortized algorithm; it is just the decision whether to trigger a recursive procedure at $w$ that is different and decided randomly here, instead of a fixed deterministic threshold. However, once $|A^{p}\cap IN(w)|=6d$, the probability to trigger a recursive call on $w$ is $1$, and therefore, the correctness guarantees and invariants are the same as before. It is only the analysis for the size of $V^*$ that requires a new argument, as we provide next. 


\begin{lemma}\label{lem:randomized}
With each new query, the size of the set $V^*$ of vertices formed to be colored is at most $O(\log n)$, with high probability. Hence, the query time of the implicit coloring algorithm is at most $O(d\log n)$, with high probability.
\end{lemma}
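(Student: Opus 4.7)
The plan is a certificate / witness-tree union-bound argument, adapted from the warm-up in Part I to accommodate the non-uniform triggering probabilities. I will define the \emph{witness tree} $W$ of the query at $u$: it is rooted at $u$, and the parent of each $v \in V^* \setminus \{u\}$ is the node that invoked the recursive call on $v$; thus $|V^*| = |V(W)|$, so it suffices to show $|V(W)| = O(\log n)$ with high probability. If $|V(W)| > k := \lceil c \log n \rceil$ for a sufficiently large constant $c$, then $W$ contains a rooted sub-arborescence $T$ with exactly $k$ arcs oriented along the given out-orientation. By a standard DFS encoding, the number of such candidate arborescences rooted at $u$ in a digraph of out-degree at most $d$ is at most $\binom{2k}{k} d^k \leq (4d)^k$. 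By a union bound, it then suffices to show $P(T \subseteq W) \leq (1/(6d))^k$ for each fixed candidate $T$; combining gives $P(|V^*| > k) \leq (4d)^k (1/(6d))^k = (2/3)^k \leq 1/\poly(n)$ for $c$ large.

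The crux is bounding $P(T \subseteq W)$ for a fixed $T$. My plan is to reformulate the algorithm equivalently by associating, with each node $w$, a single independent uniform $\tau_w \in \{1, \ldots, 6d\}$, so that $w$ is triggered by exactly the $\tau_w$-th arc incoming to $w$ that is processed while $w \notin V^*$. The telescoping computation displayed in Part II of the previous subsection verifies that this reproduces the coin-flip distribution exactly. Under this reformulation, $T \subseteq W$ is equivalent to requiring, for each non-root $w \in T$, that $\tau_w = p_w$, where $p_w$ is the position of the tree arc $(v_w, w)$ in the sequence of arcs incoming to $w$ processed while $w \notin V^*$. Because the $\tau_w$'s are mutually independent uniforms on a $6d$-element set and each such constraint pins $\tau_w$ to a single value, the desired bound $(1/(6d))^k$ would follow.

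The main obstacle I expect is that $p_w$ is itself a random variable that, when the out-orientation has a cycle through $w$, can in principle depend on $\tau_w$ (because $\tau_w$ affects when $w$'s descendants are triggered, which through the cycle can feed back and affect the ordering of arcs incoming to $w$). I plan to resolve this by exposing the coin flips lazily in the order the execution first consumes them, and invoking, for each non-root $w \in T$, the following per-node telescoping: the flips at the $p_w - 1$ incoming arcs of $w$ processed strictly before the tree arc must all come up tails, contributing conditional probabilities $\tfrac{6d-1}{6d}, \tfrac{6d-2}{6d-1}, \ldots, \tfrac{6d-p_w+1}{6d-p_w+2}$, and the flip at the tree arc itself must come up heads with conditional probability $\tfrac{1}{6d+1-p_w}$; these multiply to exactly $1/(6d)$, independent of $p_w$ and of the preceding history. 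Because the coin flips associated with arcs incoming to different tree nodes use disjoint randomness, the joint probability multiplies across the $k$ non-root tree nodes to yield $(1/(6d))^k$, and the overall bound $P(|V^*| > k) \leq (2/3)^k \leq 1/\poly(n)$ follows. Since the framework colors $V^*$ in $O(|V^*| d)$ time (and the recursive computation of $V^*$ also takes $O(|V^*| d)$ time), the worst-case query time is $O(d \log n)$ with high probability.
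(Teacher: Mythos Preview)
Your overall strategy matches the paper's: extract a rooted out-arborescence certificate of size $k=\Theta(\log n)$ whenever $|V^*|>k$, bound the number of candidate arborescences by $(4d)^k$, and bound $P(T\subseteq W)\le (1/(6d))^k$ for each fixed $T$. Your $\tau_w$ reformulation (each node is triggered at a uniformly random position among its first $6d$ processed in-arcs) is correct and is a clean way to package the randomness.

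The gap is in the last step. You write that ``the coin flips associated with arcs incoming to different tree nodes use disjoint randomness, [so] the joint probability multiplies.'' The flips themselves are independent, but each event $E_w=\{(v_w,w)\text{ triggers }w\}$ is \emph{not} a function of the flips at $w$ alone: whether and when the tree arc $(v_w,w)$ gets processed, and hence which position $p_w$ it occupies in $w$'s in-arc stream, depends on flips at other nodes (including other tree nodes), especially when the orientation has cycles. So independence of the underlying flips does not yield independence of the $E_w$'s, and you cannot simply multiply the per-node $1/(6d)$ bounds. Your obstacle paragraph names exactly this issue, but the resolution you sketch (``per-node telescoping is $1/(6d)$ regardless of $p_w$ and history; the flips use disjoint randomness; hence multiply'') does not actually close the loop.

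The paper closes it with an explicit supermartingale: it sets
\[
f(A^p,A^r)\;=\;\prod_{a'\in A'\setminus A^p}\frac{1}{6d-|IN(\mathrm{head}(a'))\cap A^p|},
\]
with $f=0$ if some arc of $A'$ has already been processed without recursing or if some other arc has already triggered a node that an $A'$-arc points to; it then checks by a short case analysis that $f$ does not increase in expectation when one more arc is processed, and notes that $f$ equals $(1/(6d))^{|A'|}$ initially and $\mathbf{1}[A'\subseteq A^r]$ at the end. Your $\tau_w$ viewpoint could also be pushed through---fixing all $\tau$'s outside $T$ and showing the system $\{\tau_w=p_w(\tau)\}_{w}$ has at most one solution in $\{1,\dots,6d\}^k$ gives the same bound and is arguably slicker---but that argument is not in your write-up. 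As a minor point, the paper runs the analysis across the full sequence of queries from an uncolored start and takes a union bound over all $n$ possible roots; you should add that as well.
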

\begin{proof}
Assume that the adversary queries the color of all the nodes, potentially in an adaptive manner. We show that  $|V^*| = O(\log n)$ with high probability during the whole execution of the algorithm. 

For the analysis, we keep track of a set $A^{r}\subseteq A^{p}$ of \textit{recursion-causing processed arcs}, which is initially empty after any edge update. When tossing a coin at the time of processing the arc $(v, w)$, if the coin came out heads and we triggered a recursion on $w$, then we add arc $(v, w)$ to $A^r$. On the other hand, if the coin was tails, we do not add the arc $(v, w)$ to $A^{r}$ (and we will never add it in the future, as the arc is processed at most once).

If the number of newly colored vertices is strictly more than $\lceil 100 \log(n)\rceil$ while answering the color query of some node $u$, then there exists some certificate $A^c$ such that $A^c \subseteq A^r$ after querying all the nodes and $A^c$ additionally satisfies that

\begin{enumerate}
    \item $|A^c| = \lceil 100 \log(n)\rceil$ and
    \item $A^c$ induces a tree rooted at $u$ with the arcs pointing away from $u$.
\end{enumerate}

The number of subsets $A^c \subseteq A$ satisfying the two properties above for some node $u \in V$ can be upper bounded by $n \cdot (4d)^{\lceil 100 \log(n)\rceil}$ using a simple and standard counting argument \cite{beps}.
We next show that for a fixed set $A' \subseteq A$ with $|A'| = \lceil 100 \log(n)\rceil$, the probability that $A' \subseteq A^r$ after all nodes have been queried is at most $\left(\frac{1}{6d}\right)^{\lceil 100 \log(n)\rceil}$. In particular, a union bound then allows us to conclude that, with probability at least 
\[1 - n \cdot \left(\frac{4d}{6d}\right)^{\lceil 100 \log(n)\rceil} \geq 1  - \frac{1}{n^{10}},\]
it holds that $|V^*| = O(\log n)$ throughout the processing of all the queries.

For an arc $a=(u, v)$, let us use the notation $head(a)=v$. In order to show that the probability that $A' \subseteq A^r$ after all queries have been processed is at most $\left(\frac{1}{6d}\right)^{\lceil 100 \log(n)\rceil}$, we show a more general claim by induction. Namely, during any point while answering the queries, we can upper bound the probability that $A' \subseteq A^r$ at the end of the algorithm by

$$f(A^p,A^r) :=
\begin{cases}
0 &\text{If $A' \cap A^p \not \subseteq A^r$} \\
0 & \text{If there exists $a' \in A'$ and $a^r \in A^r \setminus \{a'\}$ and $head(a') = head(a^r)$}\\
\prod\limits_{a' \in A' \setminus A^p} \frac{1}{6d - |IN(head(a')) \cap A^p|} &\text{Otherwise.}
\end{cases}$$

Note, if $A'\cap A^p \not \subseteq A^r$, then there exists an arc $a'$ in $A'$ that was processed but which did not trigger a recursion, and thus will also not trigger a recursion in the future. Also, note that for each node there exists at most one incoming arc which triggered a recursion.
Therefore, if there exists some arc $a'$ in the certificate $A'$ pointing to the same node as some other arc $a^r$ that triggered a recursion, then $a'$ has not and will never trigger a recursion in the future. Thus, if we are in the first two cases, and therefore $f(A^p,A^r) = 0$, we indeed get the property that $A' \not \subseteq A^r$ at the end of the algorithm. 
Moreover, note that the statement is indeed more general as we have $A^p = A^r = \emptyset$ at the beginning and $f(\emptyset,\emptyset) = \left(\frac{1}{6d}\right)^{\lceil 100 \log(n) \rceil}$. Also, note that if $A'$ has two arcs pointing to the same vertex, then it cannot happen that both of them end up in $A^r$, and therefore we can assume from now on that for every vertex, there is at most one arc in $A^c$ pointing to it.

For the proof below, we define $q(A^p,a') = \frac{1}{6d - |IN(head(a')) \cap A^p|}$ and therefore

\[\prod\limits_{a' \in A' \setminus A^p} \frac{1}{6d - |IN(head(a')) \cap A^p|} = \prod\limits_{a' \in A' \setminus A^p} q(A^p,a').\]

We prove the statement by reverse induction on $|A^p|$. As our base case, we consider the time when the algorithm has answered all the queries. To prove the base case, we have to show that $A' \subseteq A^r$ implies $f(A^p,A^r) = 1$. First, note that $A' \subseteq A^r$ trivially implies $A' \cap A^p \subseteq A^r$, thus we are not in the first case.
Now, consider some arc $a^r \in A^r$ that triggered a recursion. Then, for any other arc $a$ pointing to the same node, we know that $a$ did not trigger a recursion.
Therefore, $a \notin A^r$ and as $A'\subseteq A^r$, we also get $a \notin A'$. \\
Hence, we are in the third case. As $A^r \subseteq A^p$ and $A' \subseteq A^r$, we have $A' \setminus A^p = \emptyset$ and thus we indeed have $f(A^p,A^r) = 1$.

For our induction step, consider some point in time when the algorithm is about to add some arc $a$ to $A^p$. We first consider the more interesting case that the head of $a$ is not already in $V^* \cup V^{colored}$ and therefore the algorithm is about to flip a coin that comes up heads with probability $\frac{1}{6d + 1 - |IN(head(a)) \cap (A^p \cup \{a\})|} = \frac{1}{6d - |IN(head(a)) \cap A^p|} = q(A^p,a)$, and add $a$ to $A^r$ if the coin comes up heads. Later, we discuss the case that the head of $a$ is already in $V^* \cup V^{colored}$.

Using the induction hypothesis, we can conclude that $A' \subseteq A^r$ at the end of the algorithm with probability at most 

\[q(A^p,a) \cdot f(A^p \cup \{a\},A^r \cup \{a\}) + (1-q(A^p,a))f(A^p \cup \{a\},A^r).\]

Thus, it suffices to show that
\[q(A^p,a) \cdot f(A^p \cup \{a\},A^r \cup \{a\}) + (1-q(A^p,a))f(A^p \cup \{a\},A^r) \leq f(A^p,A^r).\]
If one of the first two cases apply, i.e., if $A' \cap A^p \not \subseteq A^r$ or there exists $a' \in A'$ and $a^r \in A^r \setminus \{a'\}$ with $head(a') = head(a^r)$, then $f(A^p \cup \{a\},A^r \cup \{a\}) = f(A^p \cup \{a\},A^r) = 0$. 

Thus, it remains to consider the case $A' \cap A^p \subseteq A^r$ and there does not exist $a' \in A'$ and $a^r \in A^r \setminus \{a'\}$ with $head(a') = head(a^r)$. We distinguish between the following three cases:

\begin{enumerate}
    \item $a \in A'$ 
    \item $head(a) \neq head(a')$ for every $a' \in A'$ and
    \item $a \notin A'$ and there exists an arc $a'' \in A'$ with $head(a) = head(a'')$.
\end{enumerate}

First, consider the case $a \in A'$. In that case, we have $f(A^p \cup \{a\},A^r) = 0$. Thus, it suffices to show that $q(A^p,a) \cdot f(A^p \cup \{a\},A^r \cup \{a\}) = f(A^p,A^r)$.
It holds that $f(A^p \cup \{a\},A^r \cup \{a\}) = \prod\limits_{a' \in A' \setminus (A^p \cup \{a\})} q(A^p \cup \{a\},a')$. Note that if $head(a) \neq head(a')$, then $q(A^p \cup \{a\},a') = q(A^p,a')$. Then, 

\[f(A^p \cup \{a\},A^r \cup \{a\}) = \prod\limits_{a' \in A' \setminus (A^p \cup \{a\})} q(A^p,a').\]

As $a \in A' \setminus A^p$ but trivially $a \notin A' \setminus (A^p \cup \{a\})$, we can conclude that

\[f(A^p,A^r) =  \prod\limits_{a' \in A' \setminus A^p} q(A^p,a') = q(A^p,a) \cdot \prod\limits_{a' \in A' \setminus (A^p \cup \{a\})} q(A^p,a') = q(A^p,a) \cdot f(A^p \cup \{a\},A^r \cup \{a\}),\]

as needed.

Next, we consider the case that $head(a) \neq head(a')$ for every $a' \in A'$. This is the simplest case, as it is easy to show that in this case it holds that $f(A^p,A^r) = f(A^p \cup \{a\},A^r) = f(A^p \cup \{a\},A^r \cup \{a\})$.

Thus, it remains to consider the case that $a \notin A'$ and there exists an arc $a''\in A'$ with $head(a) = head(a'')$.
In particular, we have $a'' \notin A^r$ and as we assume that $A' \cap A^p \subseteq A^r$, we also get that $a'' \notin A^p$.
First, note that $f(A^p \cup \{a\},A^r \cup \{a\}) = 0$.
Thus, it is sufficient for us to show that $(1 - q(A^p,a)) \cdot f(A^p \cup \{a\},A^r) = f(A^p,A^r)$. We have

\begin{align*}
    q(A^p,a'') &= \frac{1}{6d - |IN(head(a'')) \cap A^p|} \\
                          &= \frac{1}{6d - |IN(head(a'')) \cap A^p| - 1} \cdot \frac{6d - |IN(head(a'')) \cap A^p| - 1}{6d - |IN(head(a'')) \cap A^p|} \\
                          &=  \frac{1}{6d - |IN(head(a'')) \cap A^p| - 1} \cdot \left(1 - \frac{1}{6d - |IN(head(a'')) \cap A^p|} \right) \\
                          &=  \frac{1}{6d - |IN(head(a'')) \cap (A^p \cup \{a\})|} \cdot \left(1 - \frac{1}{6d - |IN(head(a)) \cap A^p|} \right) \\
                          &= q(A^p \cup \{a\},a'') \cdot (1 - q(A^p,a)).
\end{align*}

Since $q(A^p \cup \{a\},a') = q(A^p,a')$ for every $a' \in A' \setminus \{a''\}$, we therefore get

\begin{align*}
f(A^p,A^r) &= \prod\limits_{a' \in A' \setminus A^p} q(A^p,a') \\
&= \frac{q(A^p,a'')}{q(A^p \cup \{a\},a'')} \cdot \prod\limits_{a' \in A' \setminus A^p} q(A^p \cup \{a\},a') \\
&= (1-q(A^p,a)) \prod\limits_{a' \in A' \setminus A^p} q(A^p \cup \{a\},a') \\
&= (1-q(A^p,a)) \prod\limits_{a' \in A' \setminus (A^p \cup \{a\})} q(A^p \cup \{a\},a') \\
&= (1-q(A^p,a)) \cdot f(A^p \cup \{a\},A^r).
\end{align*}

Finally, it remains to consider the case that $head(a) \in V^* \cup V^{colored}$, in which case the algorithm does not recurse on $a$. Thus, we can use the induction hypothesis to conclude that the probability is at most $f(A^p \cup \{a\},A^r)$ that  $A' \subseteq A^r$. Note that $f(A^p \cup \{a\},A^r) \leq f(A^p,A^r)$ unless there exists an arc $a' \in A' \setminus A^r$ with $head(a') = head(a)$. However, then $a'$ is pointing to a node in $V^* \cup V^{colored}$ and therefore $a' \notin A^r$ after all queries have been processed.
\end{proof}

\begin{proof}[Proof of \Cref{thm:main}]
By \Cref{thm:partition}, to get $O(\alpha)$ coloring in $\poly(\log n)$ update time and $\poly(\log n)$ query time, it suffices to have a dynamic algorithm that maintains an $O(\alpha)$ implicit coloring of the graph in $\poly(\log n)$ update time and $\poly(\alpha \log n)$ query time. The latter follows directly from two ingredients: (1) The orientation algorithm of Christiansen et al.~\cite{christiansen2022adaptive}, as summarized in \Cref{thm:orientation}, which dynamically keeps an $O(\alpha)$ orientation with update time $\poly(\log n)$. (2) The algorithm described above that, given an orientation with outdegree at most $d=O(\alpha)$, computes an implicit $O(d)$ coloring of the graph with worst-case query time $O(d\log n)$, as proven in \Cref{lem:randomized}. 
\end{proof}
\subsection*{Acknowledgement}
We thank Aleksander Christiansen, Krzysztof Nowicki, and Eva Rotenberg for helpful feedback about a preliminary draft of this work and connections to theirs~\cite{Christiansen23}.
\bibliographystyle{alpha}
\bibliography{sample}

\newcommand{\etalchar}[1]{$^{#1}$}
\begin{thebibliography}{CHvdH{\etalchar{+}}24}

\bibitem[ACK19]{assadi2019sublinear}
Sepehr Assadi, Yu~Chen, and Sanjeev Khanna.
\newblock Sublinear algorithms for (${\Delta}$+ 1) vertex coloring.
\newblock In {\em ACM-SIAM Symposium on Discrete Algorithms (SODA)}, pages
  767--786, 2019.

\bibitem[BCK{\etalchar{+}}19]{barba2019dynamic}
Luis Barba, Jean Cardinal, Matias Korman, Stefan Langerman, Andr{\'e}
  Van~Renssen, Marcel Roeloffzen, and Sander Verdonschot.
\newblock Dynamic graph coloring.
\newblock {\em Algorithmica}, 81:1319--1341, 2019.

\bibitem[Bec91]{beck1991algorithmic}
J{\'o}zsef Beck.
\newblock An algorithmic approach to the {L}ov{\'a}sz local lemma. {I}.
\newblock {\em Random Structures \& Algorithms}, 2(4):343--365, 1991.

\bibitem[BEPS16]{beps}
Leonid Barenboim, Michael Elkin, Seth Pettie, and Johannes Schneider.
\newblock The locality of distributed symmetry breaking.
\newblock {\em Journal of the ACM (JACM)}, 63(3), jun 2016.

\bibitem[BGK{\etalchar{+}}22]{bhattacharya2022fully}
Sayan Bhattacharya, Fabrizio Grandoni, Janardhan Kulkarni, Quanquan~C Liu, and
  Shay Solomon.
\newblock Fully dynamic (${\Delta}$+ 1)-coloring in o (1) update time.
\newblock {\em ACM Transactions on Algorithms (TALG)}, 18(2):1--25, 2022.

\bibitem[CFG{\etalchar{+}}19]{chang2019complexity}
Yi-Jun Chang, Manuela Fischer, Mohsen Ghaffari, Jara Uitto, and Yufan Zheng.
\newblock The complexity of (${\Delta}$+ 1) coloring in congested clique,
  massively parallel computation, and centralized local computation.
\newblock In {\em ACM Symposium on Principles of Distributed Computing (PODC)},
  pages 471--480, 2019.

\bibitem[CHvdH{\etalchar{+}}24]{christiansen2022adaptive}
Aleksander~BG Christiansen, Jacob Holm, Ivor van~der Hoog, Eva Rotenberg, and
  Chris Schwiegelshohn.
\newblock Adaptive out-orientations with applications.
\newblock In {\em ACM Symposium on Discrete Algorithms (SODA)}, pages to
  appear, preprint arXiv:2209.14087, 2024.

\bibitem[CLP18]{chang2018optimal}
Yi-Jun Chang, Wenzheng Li, and Seth Pettie.
\newblock An optimal distributed (${\Delta}$+ 1)-coloring algorithm?
\newblock In {\em ACM Symposium on Theory of Computing (STOC)}, pages 445--456,
  2018.

\bibitem[CNR22]{Christiansen23}
Aleksander B.~G. Christiansen, Krzysztof Nowicki, and Eva Rotenberg.
\newblock Improved dynamic colouring of sparse graphs.
\newblock In {\em ACM Symposium on Theory of Computing (STOC)}, pages
  1201--1214, arXiv:2211.06858, 2022.

\bibitem[CV86]{cole1986deterministic}
Richard Cole and Uzi Vishkin.
\newblock Deterministic coin tossing with applications to optimal parallel list
  ranking.
\newblock {\em Information and Control}, 70(1):32--53, 1986.

\bibitem[GK22]{ghaffari2022deterministic}
Mohsen Ghaffari and Fabian Kuhn.
\newblock Deterministic distributed vertex coloring: Simpler, faster, and
  without network decomposition.
\newblock In {\em IEEE Symposium on Foundations of Computer Science (FOCS)},
  pages 1009--1020, 2022.

\bibitem[GPS87]{goldberg1987parallel}
Andrew Goldberg, Serge Plotkin, and Gregory Shannon.
\newblock Parallel symmetry-breaking in sparse graphs.
\newblock In {\em ACM symposium on Theory of computing (STOC)}, pages 315--324,
  1987.

\bibitem[HNW20]{henzinger2020explicit}
Monika Henzinger, Stefan Neumann, and Andreas Wiese.
\newblock Explicit and implicit dynamic coloring of graphs with bounded
  arboricity.
\newblock {\em arXiv preprint arXiv:2002.10142}, 2020.

\bibitem[Lin92]{linial1992locality}
Nathan Linial.
\newblock Locality in distributed graph algorithms.
\newblock {\em SIAM Journal on computing (SICOMP)}, 21(1):193--201, 1992.

\bibitem[Lub85]{luby1985simple}
Michael Luby.
\newblock A simple parallel algorithm for the maximal independent set problem.
\newblock In {\em ACM symposium on Theory of computing (STOC)}, pages 1--10,
  1985.

\bibitem[MB83]{matula1983smallest}
David~W Matula and Leland~L Beck.
\newblock Smallest-last ordering and clustering and graph coloring algorithms.
\newblock {\em Journal of the ACM (JACM)}, 30(3):417--427, 1983.

\bibitem[NW64]{nash1964decomposition}
C~St~JA Nash-Williams.
\newblock Decomposition of finite graphs into forests.
\newblock {\em Journal of the London Mathematical Society}, 1(1):12--12, 1964.

\bibitem[Zuc06]{zuckerman2006linear}
David Zuckerman.
\newblock Linear degree extractors and the inapproximability of max clique and
  chromatic number.
\newblock In {\em ACM symposium on Theory of computing (STOC)}, pages 681--690,
  2006.

\end{thebibliography}

\end{document}